\theoremstyle{plain}
\newtheorem{thm}{Theorem}
\newtheorem{defin}{Definition}
\newtheorem{corol}{Corollary}
\newtheorem{lemma}{Lemma}
\begin{document}


\title{Uncertainty and absence of arbitrage opportunity}%


\author {Y. Ivanenko \footnote{IAE de Paris, France, yarsky1@gmail.com}, \hspace{10pt}{I. Pasichnichenko} \footnote{Institute for Applied System Analysis, Kyiv, Ukraine, io.pasich@gmail.com}}

\date{\vspace{-5ex}}

\maketitle

\begin{abstract}
It is shown that absence of arbitrage opportunity in  financial markets is a particular case of existence of uncertainty in decision system. Absence of arbitrage opportunity is considered in the sense of the Arrow-Debreu model of financial market with a riskless asset, while uncertainty (or ambiguity) is defined on the basis of the principle of internal coherence of M. Allais.
\\
\\
\textbf{Key words} \textit{Uncertainty, Ambiguity, Arbitrage opportunity, Decision system}
\end{abstract}

\section{Introduction}

The notion of uncertainty as of something opposite to risk was proposed by Frank Knight \cite{K}. The presence or availability of probability distribution on the set of consequences or on the set of states of Nature was considered as \textit{risk}, while the absence or unavailability of such probability distribution was considered as \textit{uncertainty}. This paradigm considerably influenced the consequent development of the so called \textit{descriptive} decision theory \cite{S, Bell}. So decision theories, explaining the Ellsberg paradox, essentially recurred to the definition of uncertainty proposed by Knight and often used the words \textit{ambiguity} and \textit{uncertainty} as synonyms \cite{E, Bouy}.

There exist, however, several arguments in favour of an alternative view on the problem of formalisation of the notion of uncertainty or ambiguity. One of such arguments is the possibility to define these notions solely on the basis of what Maurice Allais called "the principle of internal coherence employed in social sciences"\footnote{«The principle of internal consistency implies only: (a) the use of objective probabilities when they exist, and (b) the axiom of absolute preference which states that out of two situations, one is certainly preferable if, for all possible outcomes, it yields a greater gain.»(\cite{A}, p.504).}. Assuming, following Allais, that rationality implies that decisions are made for the sake of achieving goals (see \cite{A}, p.504), one can define uncertainty as existence of several preference relations on the set of decisions (acts) when the preference relation on consequences is fixed \cite{IM2008, I2010}. \footnote{In other words, the person knows what does she want, but does not know which way to choose in order to achieve that.\label{more}} It can be shown, that existence of at least one pair of acts on which the preference relation is not unique is already necessary and sufficient for existence of uncertainty.

It turns out that the stronger requirement, that uncertainty, thus defined, exists for any pair of acts is analogical to the notion of absence of arbitrage opportunity used in mathematical finance \cite{V, AV, D}. This means that the notion of absence of arbitrage opportunity is the particular case of existence of uncertainty. Therefore one can consider many subjects that traditionally have been considered as specific exclusively to financial theory, in general, and mathematical finance, in particular, from the vantage point of decision theory, and remaining strictly within its borders. \footnote{Note that the notion of arbitrage decision in game theory is opposite to the one employed in finance. Recall that in game theory a decision is called the arbitrage decision if it satisfies all players \cite{N, LR}, while in financial theory, when speaking about arbitrage opportunity, it is always understood as satisfaction of interests of only one person.}

Note, that decision theory based on this definition of uncertainty leads to the following distinction between risk and uncertainty. Risk becomes the property of one decision, meaning solely the multiplicity of consequences of the decision, without any relation to whether or not there is a probability distribution. Uncertainty (or ambiguity) is the property of decision situation, that is when there are many decisions, each of which may have many consequences \cite{I2010}. That is, uncertainty can exist there where previously (according to Knight) there could be only risk. Moreover, in this formulation the notion of uncertainty (or ambiguity) makes sense only relatively to the goals of decision maker, described by her preference relation on the set of consequences. \footnote{Remark, that this approach to risk and uncertainty is in line with the notion of  \textit{coherent} uncertainty measures \cite{deGr, Mar, I2010}.} That is why it makes sense to consider the approach to formalisation of the notion of uncertainty (or ambiguity) proposed here as \textit{normative} \cite{Bell} or \textit{operational}.

The article is organised in the following way. In Section 2 the definition and the criterion of existence of uncertainty in decision situation are introduced. Note that the notions of uncertainty and ambiguity are understood as synonyms. In Section 3 it is shown that, on the one hand, the absence of arbitrage opportunity on financial markets is a particular case of existence of uncertainty in decision situation, and, on the other hand, it is possible to extend the notion of arbitrage opportunity from financial markets to the general decision situation. Thus Section 3 can be considered as a concrete, though theoretical, example of the abstract framework of Section 2.

\section{Uncertainty}

Pursuing the examination of conditions of existence of uncertainty, initiated in \cite{IM2008, I2010}, consider the problem of formal description of the class of decision making situations, where the preference of the individuum relatively to consequences does not "completely define" her preferences relatively to decisions (acts). In other words, consider such situations where preference relation on consequences is compatible with several preference relations on acts at the same time. We need to refine the intuitive understanding of how preferences on consequences can define preferences on acts. This refinement may have the form of the condition imposed on the relation between the two preference relations. This condition should be weak enough in order to span the largest class of possible behavior. This article employs the condition of domination, which is necessary in order to formalize mathematically the principle of internal consistency mentioned above.

Due to terminological discrepancies in contemporary literature, let us recall here properties of the binary relations that are used in this article. Let $X$ be an arbitrary set.
\begin{defin}\label{def1}
Relation $(\prec, X)$  is called preference relation if it is:
\begin{itemize}
\item[1)] assymetric: $x \prec y \Rightarrow \text{ not } y \prec x$ for all $x,y \in X$;
\item[2)] negatively transitive: $\text{ not } x \prec y, \text{ not } y \prec z \Rightarrow \text{ not } x \prec z $ for all $x,y,z \in X$.
\end{itemize}
\end{defin}
Thus the preference relation is the irreflexive version of a common weak order. Having defined the preference relation one can define the indifference relation in the usual way, assuming
\begin{equation*}
x \sim y \Leftrightarrow (\text{ not } x \prec y \text{ and } \text{ not } y \prec x ) \text{ for all } x,y \in X
\end{equation*}
That indifference relation is equivalence is straightforward.
\begin{defin}\label{def2}
Relation $(\prec, X)$  is called strict partial order if it is:
\begin{itemize}
\item[1)] irreflexive: $\text{ not } x \prec x$ for all $x \in X$;
\item[2)] transitive: $x \prec y, y \prec z \Rightarrow x \prec z$ for all $x,y,z \in X$.
\end{itemize}
If the strict partial order is, besides,
\begin{itemize}
\item[3)] connected: $x \neq y \Rightarrow x \prec y \text{ or } y \prec x$ for all $x,y \in X$,
\end{itemize}
then it is called strict linear order. \footnote{Strict linear order is a preference relation as well.}
\end{defin}

Preference relation on the set of consequences reflects their desirability to the decision maker or, in other words, her interests. Consider, for simplicity, as consequences real numbers with their natural order $<$ as the preference relation (see footnote \ref{more}). Here, for real numbers, the only meaningful operation is their comparison in value. This identification of consequences with real numbers fits all the cases where the conditions of the theorem of existence of utility function are satisfied (see, for instance, \cite{F}, p.27). It is clear, however, that one and the same real life situation with consequences of arbitrary nature may have for different persons the form of different situations with numeric consequences, depending on personal preferences of consequences.

\begin{defin}\label{def3}
A pair $(\Theta, D)$, where $D$ is some set of real functions on the set $\Theta$, is called a matrix decision scheme (matrix scheme), $D$ is called the set of acts (decisions), $\Theta$ is the set of values of some  parameter $\theta$.
\end{defin}

The set of acts contains the acts available in the decision situation. The consequences of each of these acts are, in the general case, not unique, which is described by means of dependence of consequences, i.e. values of functions from the set $D$, on the parameter $\theta \in \Theta$, usually called the state of Nature \cite{S}. The matrix scheme does not contain any complementary information about the regularity of appearance of consequences. For instance, it does not contain a probability distribution on the set $\Theta$. That is why such pairs of sets are called "schemes".\footnote{It is worth noticing that a matrix decision scheme thus defined is a combination of the usual Savage decision-theoretical structure, consisting of the set of acts, $A$, of the set of states of Nature, $\Theta$, of the set of consequences, $C$, of the mapping $g: A \times \Theta \rightarrow C$, and of the utility function on consequences, $u: C \rightarrow \mathbb{R}$. That is in terms of \cite{IM2008, I2010} we are dealing here with the pair $(Z_m, \beta_C), Z_m=(\Theta, A, C, g)$, where the preference relation of  consequences $\beta_C$ is represented by some utility function $u(g(\theta,a))$.}

\begin{defin}\label{def4}
A relation $(\prec, D)$ defined on the set of acts is called \textbf{domination} if for any $d_1, d_2 \in D$\\
$d_1 \prec d_2 \Leftrightarrow \left[ d_1(\theta) \leq d_2(\theta) \text{ for all } \theta \in \Theta \text{ and } d_1(\theta^*) < d_2(\theta^*) \text{ for some } \theta^* \in \Theta \right]$
\end{defin}

Apparently, domination is a strict partial order. We are interested here only in those preference relations on the set of acts that are \textit{compatible} with domination in the sense of the following definition.

\begin{defin}\label{def5}
The preference relation $(\prec^{P},D)$ defined on the set of acts of the matrix scheme $(\Theta,D)$ is called a projection of preference of consequences in the scheme $(\Theta,D)$ (or just a projection), if
\begin{equation}\label{eq1}
d_1 \prec d_2 \Rightarrow d_1 \prec^{P} d_2 \text{ for any } d_1, d_2 \in D,
\end{equation}
where $(\prec, D)$ is domination.
\end{defin}

\begin{defin}\label{def6}
A matrix scheme $(\Theta,D)$ contains uncertainty if the projection of preference of consequences is not unique.
\end{defin}
Thus, definition \ref{def6} divides matrix decision schemes and, correspondingly,  decision situations, on two classes: containing and not containing uncertainty.

The next theorem answers divers questions relatively to uniqueness of the projection and, correspondingly, the question of existence of uncertainty in decision scheme.
\begin{thm}\label{the1}
In any matrix scheme
\begin{itemize}
\item[1)] there exist a projection of preference of consequences;
\item[2)] projection of preference of consequences is unique if and only if domination is connected;
\item[3)] uniqueness of projection of preference of consequences implies its identity with domination.
\end{itemize}
\end{thm}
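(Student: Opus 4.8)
The plan is to establish the three claims in order, using the structure of preference relations (asymmetric, negatively transitive) and the definition of domination throughout.

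For part 1), I would exhibit an explicit projection. The natural candidate is domination itself, \emph{provided} it happens to be a preference relation; but in general domination is only a strict partial order, so its negative transitivity may fail. The safe construction is to take any linear extension of the strict partial order $(\prec,D)$ — such an extension exists by the order-extension principle (Szpilrajn), at least assuming the relevant set-theoretic machinery — and observe that a strict linear order is a preference relation (as noted in the footnote to Definition \ref{def2}) and by construction contains domination, hence satisfies \eqref{eq1}. This gives existence. I would remark that when $\Theta$ is such that domination is already connected, domination itself serves, anticipating part 2).

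For part 2), the ``if'' direction: suppose domination is connected. Then domination is a strict linear order, hence a preference relation, hence a projection of itself. Any other projection $\prec^P$ must contain domination by \eqref{eq1}; I must show it \emph{equals} domination. The key point is that a strict linear order is maximal among asymmetric relations in the following sense: if $d_1 \prec^P d_2$ but not $d_1 \prec d_2$, then by connectedness either $d_1 \sim d_2$ under domination (impossible once we rule out $d_2 \prec d_1$) or $d_2 \prec d_1$, whence $d_2 \prec^P d_1$ by \eqref{eq1}, contradicting asymmetry of $\prec^P$; the remaining case $d_1 = d_2$ contradicts irreflexivity. So $\prec^P$ cannot strictly extend a connected domination, giving uniqueness. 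For the ``only if'' (contrapositive) direction: suppose domination is not connected, so there are $d_1 \neq d_2$ with neither $d_1 \prec d_2$ nor $d_2 \prec d_1$. I would construct two distinct projections. One is any linear extension placing $d_1$ below $d_2$; another is a linear extension placing $d_2$ below $d_1$. Both contain domination and are strict linear orders, hence projections, and they disagree on the pair $(d_1,d_2)$, so the projection is not unique. Care is needed that a linear extension with a \emph{prescribed} comparison on one incomparable pair exists — this follows by first adding the pair $d_1 \prec d_2$ (and its transitive consequences) to the strict partial order, checking the result is still a strict partial order (no cycle is created precisely because the pair was incomparable), then taking a linear extension of that.

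Part 3) is then essentially a corollary of the argument in part 2): if the projection is unique, I claim domination must be connected, and then the uniqueness argument above identifies the unique projection with domination. To see domination is connected under the uniqueness hypothesis: if it were not connected, part 2)'s construction would produce two distinct projections, a contradiction. Hence domination is connected, a strict linear order, and a projection; and by the maximality argument it is the \emph{only} projection, so it coincides with the given unique one. The main obstacle I anticipate is the repeated reliance on existence of linear extensions of an arbitrary strict partial order on a possibly infinite set $D$; I would either invoke the Szpilrajn extension theorem explicitly or, if the paper prefers to stay elementary, note that for the specific constructions needed one only ever extends by finitely many comparisons over a potentially infinite ground relation and argue directly that asymmetry and negative transitivity can be arranged, sidestepping a full well-ordering argument where possible.
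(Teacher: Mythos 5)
Your proposal is correct and follows essentially the same route as the paper: Szpilrajn's extension theorem for existence, the same connectedness/asymmetry argument for the ``if'' direction of 2), and for the ``only if'' direction the same device of first adjoining the prescribed comparison on the incomparable pair (your ``add the pair and its transitive consequences'' is exactly the paper's relation \eqref{eq3}) and then extending to a strict linear order, with 3) deduced from 2). The only divergence is your closing aside about possibly avoiding Szpilrajn by adding finitely many comparisons, which would not in general yield negative transitivity, but since you present it as optional and your main argument invokes Szpilrajn, nothing is missing.
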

\begin{proof} Let $(\Theta, D)$ be an arbitrary matrix scheme.\\
1) Existence of projection follows imediatley from  Szpilrajn extension theorem (see, for instance, \cite{F}, p.31), according to which any strict partial order can be extended to strict linear order. Applying this theorem to domination $(\prec, D)$, we obtain a strict linear order $(\prec^{P}, D)$ such that
\begin{equation}\label{eq2}
d_1 \prec d_2 \Rightarrow d_1 \prec^{P} d_2 \text{ for any } d_1, d_2 \in D,
\end{equation}
guaranteeing largely fulfillement of the conditions of definition \ref{def5}.\footnote{A projection  should be a preference relation, not necessarily a strict linear order.}\\

2) \textbf{Sufficiency}. If domination is connected, then it is a preference relation and, hence, a projection of preference of consequences. Assume there is a projection $(\prec^{P},D)$ different from domination. That is assume there exist such $d_1, d_2 \in D$ that, $d_1 \prec^{P} d_2$ and not $d_1 \prec d_2$. But due to connectedness of domination, from $(\text{ not } d_1 \prec d_2)$ follows that either $d_1 = d_2$, or $d_2 \prec d_1$. In the second case, condition (\ref{eq1}) implies $d_2 \prec^{P} d_1$. So, in both cases assymetry of projection gives contradiction with the initial assumption that $d_1 \prec^{P} d_2$. \\

\textbf{Necessity}. Assume that domination is not connected and construct two different projections. Let none of the relations $d_1=d_2, d_1 \prec d_2, d_2 \prec d_1$ is fulfilled for some $d_1, d_2 \in D$. Construct the projetion $(\prec_{1}^{P}, D)$ as described above in section 1 of the proof. Such projection, moreover, will be a strict linear order. Therefore, it implies either $d_1 \prec_{1}^{P} d_2$ or $d_2 \prec_{1}^{P} d_1$. Let, for instance, the first case takes place. Construct now the projection $(\prec_{2}^{P}, D)$, for which $d_2 \prec_{2}^{P} d_1$ holds. For the sake of this construction extend the domination $(\prec, D)$ in a suitable way to $(\prec^{'}, D)$, assuming for any $x,y \in D$, that
\begin{equation}\label{eq3}
x \prec^{'} y \Leftrightarrow x \prec y \text{ or } \left[ (x \prec d_2 \text{ or } x = d_2) \text{ and } (d_1 \prec y \text{ or } y=d_1) \right].
\end{equation}
It is clear that $d_2 \prec^{'} d_1$ and
\begin{equation}\label{eq4}
x \prec y \Rightarrow x \prec' y \text{ for any } x,y \in D.
\end{equation}
It can be shown by sorting out all the possibilities in the right hand side of (\ref{eq3}), that $(\prec', D)$ is a strict partial order. Now repeat the reasonings of the section 1 of the proof for $(\prec', D)$ instead of $(\prec, D)$ and obtain the strict linear order $(\prec_{2}^{P})$ with $d_2 \prec_{2}^{P} d_1$. Condition (\ref{eq1}) is fulfilled due to (\ref{eq4}) and (\ref{eq2}). Hence we obtained the projection different from the previous one.

The statement 3 of the theorem follows immediately from the statement 2, since connectedness of domination turns this relation from strict partial order into strict linear order and, hence, into a projection of preference of consequences.
\end{proof}

In definition \ref{def6} we try to convey intuitive ideas about existence of uncertainty in matrix decision scheme.
At the same time, statement 2 of theorem \ref{the1} indicates how one can simplify the formal usage of this notion. This is done in the next proposition, providing the criterion of existence of uncertainty in matrix scheme.

\begin{corol}
Matrix scheme $(\Theta,D)$ contains uncertainty if and only if there are such $d_1, d_2 \in D$ and such $\theta, \theta' \in \Theta$, that $d_1(\theta) < d_2(\theta)$ and $d_2(\theta')<d_1(\theta')$.
\end{corol}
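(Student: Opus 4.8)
The plan is to read the criterion off statement~2 of Theorem~\ref{the1}, which has already reduced the question of uniqueness of the projection to connectedness of domination. Indeed, by Definition~\ref{def6} the scheme $(\Theta,D)$ contains uncertainty exactly when the projection of preference of consequences is not unique, and by statement~2 of Theorem~\ref{the1} this happens if and only if domination $(\prec,D)$ is \emph{not} connected. So the entire proof consists in translating ``domination is not connected'' into the stated pointwise condition on two acts. Failure of connectedness means, by Definition~\ref{def2}(3), that there exist $d_1,d_2\in D$ with $d_1\neq d_2$ for which neither $d_1\prec d_2$ nor $d_2\prec d_1$ holds.

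Next I would unfold Definition~\ref{def4}. Negating ``$d_1\prec d_2$'' gives two alternatives: either $d_1(\theta')>d_2(\theta')$ for some $\theta'\in\Theta$, or else $d_1(\theta)\ge d_2(\theta)$ for all $\theta$ (no point of strict increase). Here the hypothesis $d_1\neq d_2$ is exactly what rules out the second alternative: if $d_1\ge d_2$ everywhere and the two functions are nevertheless distinct, then $d_1>d_2$ at some point, so $d_2\prec d_1$, contradicting our assumption. Hence $d_1(\theta')>d_2(\theta')$ for some $\theta'$. Applying the symmetric argument to ``not $d_2\prec d_1$'' produces some $\theta\in\Theta$ with $d_2(\theta)>d_1(\theta)$, and this is precisely the asserted condition. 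For the converse I would simply verify that if such $d_1,d_2,\theta,\theta'$ exist then connectedness fails on the pair $d_1,d_2$: they are distinct since they disagree at $\theta$; $d_1\not\prec d_2$ because $d_1(\theta')>d_2(\theta')$ violates ``$d_1\le d_2$ everywhere''; and symmetrically $d_2\not\prec d_1$. Combined again with statement~2 of Theorem~\ref{the1}, this yields the equivalence.

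I do not expect a genuine obstacle here: the argument is a routine unwinding of Definitions~\ref{def4}--\ref{def6} sitting on top of the already established Theorem~\ref{the1}. The one place that needs care is the negation of domination, where one must keep the degenerate possibility ``$d_1$ weakly dominates $d_2$ but they coincide everywhere'' separate from ``$d_1$ is strictly below $d_2$ at some point'' --- it is only the assumption $d_1\neq d_2$ that collapses these and makes the translation clean.
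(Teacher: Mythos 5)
Your proof is correct and follows essentially the same route as the paper: reduce uncertainty to non-connectedness of domination via statement~2 of Theorem~\ref{the1}, then unwind Definition~\ref{def4} on a pair of incomparable distinct acts. You merely spell out the negation of domination more explicitly than the paper does, which is a harmless (indeed careful) elaboration of the same argument.
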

\begin{proof}
By definition, the matrix scheme contains uncertainty if and only if the projection of preference of consequences is not unique. According to statement 2 of theorem \ref{the1}, this is equivalent to non connectedness of domination. That is, there are such distinct $d_1, d_2 \in D$ that neither $d_2 \prec d_1$, nor $d_1 \prec d_2$ is true. This, in its own turn, by definition of domination, is equivalent to existence of such $\theta, \theta' \in \Theta$, that $d_1(\theta) < d_2(\theta)$ and $d_2(\theta')<d_1(\theta')$.
\end{proof}

Hence, even if a decision making situation contains only two states of Nature, there are two decisions, the consequences of which so depend on the states of Nature, that the matrix scheme contains uncertainty. The following figures 1 and 2 illustrate these results.
\begin{figure}[htb]
\begin{center}
\includegraphics[height=7cm,width=10cm]{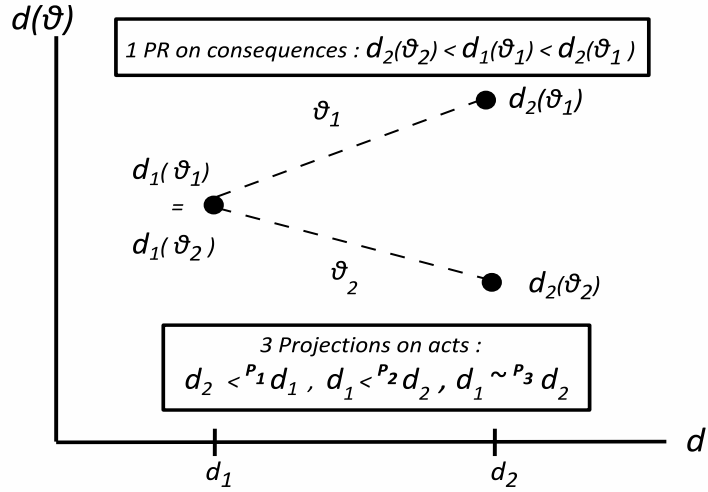}
\caption{An illustration of the situation with uncertainty.}\label{uncertainty}
\end{center}
\end{figure}

\begin{figure}[htb]
\begin{center}
\includegraphics[height=7cm,width=10cm]{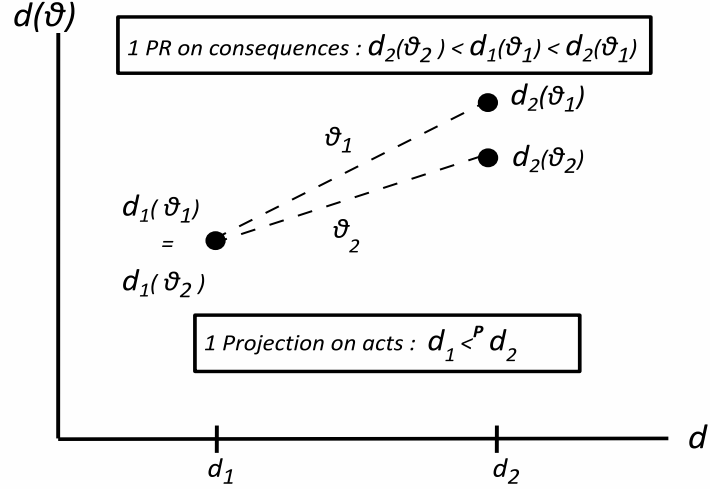}
\caption{An illustration of the situation without uncertainty.}\label{nouncertainty}
\end{center}
\end{figure}

Note that this approach to formalisation of the notion of uncertainty is in logical correspondence with Allais's principle of internal consistency, cited above. Namely, if projection is unique, then the act that dominates another act in all states is always preferable.

\section{Absence of arbitrage opportunity}

The notion of uncertainty in matrix scheme finds its immediate application in finance. It turns out that the notion of absence of arbitrage opportunity on the financial market can be considered as an example of existence of uncertainty in matrix decision scheme.

In the case of financial decision making situations, the convention related to numeric valuation of consequences does not play any role, since the initial situation contains only monetary consequences. We can assume that all decision makers prefer higher monetary sums to lower. By the same token, we assume that monetary sums play the role of the utility or, in other words, the values of the utility function on consequences coincide, in this case, with consequences themselves. This fact, in particular, will allow us considering financial decision schemes as examples of the matrix decision scheme presented in the previous section.

Recur to the following formal description of the financial market, related to the Arrow-Debreu model of securities market \cite{V, AV, D}.
\begin{defin}\label{def7}
A pair $(A,p), A \in \mathbb{R}^{n \times m}, p \in \mathbb{R}^{n}$ is called financial market with a riskless asset, if $A_{1j}=1$ for any $j \in \left\{1,\ldots,m\right\}$ and $p_1 \leq 1$.
\end{defin}
The elements of the set $\Theta=\left\{1,\ldots,m\right\}$ are called the states of the market. The matrix $A$ contains cash-flows of each of the $n$ securities for all states of the market. So that $A=\left[ a_1\hspace{5pt}a_2\ldots a_m \right]$, where $a_\theta=\left[a_{1\theta}\hspace{5pt}a_{2\theta} \ldots  a_{n\theta}\right]^T \in \mathbb{R}^n$, contains securities cash-flows in the state $\theta$. The vector $p \in \mathbb{R}^n$ consists of prices of the securities. Without loss of generality one can consider $p_1=1$. \footnote{Note that the riskless interest rate is equal to $\frac{1-p_1}{p_1}$. If $p_1 < 1$ then one can discount initial prices with the discount factor $1 + \frac{1-p_1}{p_1}= \frac{1}{p_1}$.} A portfolio is a vector $x \in \mathbb{R}^n$ and the components of its aggregate cash-flow in each state are the elements of  $A^{T}x \in \mathbb{R}$.

\begin{defin}\label{def8} Portfolio $x \in \mathbb{R}^n$ is called an arbitrage portfolio on the market $(A,p)$ if either $(p^Tx \leq 0 \text{ and } A^Tx >0)$ or $(p^Tx < 0 \text{ and } A^Tx \geq 0)$.
\end{defin}

Usually (see, for instance, \cite{AV,D}) this definition is followed by The Arbitrage Theorem, stating that the market does not contain an arbitrage portfolio if and only if there is a vector $\psi \in \mathbb{R}^n$ with positive components, called state prices, such that $p=A\psi$. However, this theorem is not the object of attention in this article, since our goal here is establishing of the relation between the notion of absence of arbitrage opportunity and  the notion of existence of uncertainty in matrix scheme.

Represent the financial market $(A,p)$ in terms of the matrix scheme $(\Theta, D)$, where $\Theta=\left\{ 1,2,\ldots,m\right\}$. Define the matrices $B=A- \left[p\hspace{5pt} p \ldots p\right], B \in \mathbb{R}^{n \times m}$. The set of acts $D=\left\{B^Tx: x \in \mathbb{R}^n \right\}, D \subset \mathbb{R}^m$ consists of all vectors of profits available on the market. The components of each such vector are profits of the portfolio $x$ in the corresponding state. Such matrix scheme $(\Theta, D)$ is called \textit{corresponding} to the market $(A,p)$. \footnote{In the terms of Savagian decision-theoretical structure we deal here with the following matrix scheme: the set of states of Nature, $\Theta=\left\{1,\ldots,m\right\}$, the set of acts $X=\mathbb{R}^n$, the set of consequences, $C=\mathbb{R}$, the mapping $g:\Theta \times X \rightarrow C$, such that $g(\theta,x)=\sum_{i=1}^{n}(a_{i\theta}-p_i)x_i, g(\cdot,\cdot) \in C$ and the utility function $u(g(\theta,x))=g(\theta,x), x \in X, \theta \in \Theta$.}

\begin{lemma}\label{lem1}
The market $(A,p)$ contains arbitrage portfolio if and only if the corresponding matrix scheme $(\Theta, D)$ contains a decision $d\in D$ such that $d >0$.\footnote{Here and in what follows we use the following conventional notation: for two vectors $x,y \in \mathbb{R}^n$, $x \leq y$ means $x_i \leq y_i, \forall i=1,\ldots,n$ and $x < y$ means $(x \leq y$  and $x \neq y)$..\label{vecs}}
\end{lemma}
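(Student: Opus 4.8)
The plan is to reduce everything to the linear algebra connecting the profit vectors $B^{T}x$ with the payoff vectors $A^{T}x$ and the prices $p^{T}x$. The identity I would record first is that, writing $\mathbf{1}$ for the all-ones vector in $\mathbb{R}^{m}$, one has $B^{T}x = A^{T}x - (p^{T}x)\,\mathbf{1}$ for every portfolio $x\in\mathbb{R}^{n}$, which is immediate since every column of $B$ equals $a_{\theta}-p$. A second preliminary observation concerns the riskless asset: with $e_{1}=(1,0,\ldots,0)^{T}\in\mathbb{R}^{n}$ we have $A^{T}e_{1}=\mathbf{1}$ and $p^{T}e_{1}=p_{1}=1$ (using $A_{1j}=1$ for all $j$ and the normalisation $p_{1}=1$ allowed by Definition \ref{def7}), hence $B^{T}e_{1}=\mathbf{1}-\mathbf{1}=0$. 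So adding a multiple of $e_{1}$ to a portfolio changes its price but not its profit vector.

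For the direction "arbitrage portfolio $\Rightarrow$ some $d\in D$ with $d>0$", I would take an arbitrage portfolio $x$ and split into the two cases of Definition \ref{def8}. If $p^{T}x\le 0$ and $A^{T}x>0$, then $-(p^{T}x)\,\mathbf{1}\ge 0$, so $B^{T}x=A^{T}x-(p^{T}x)\,\mathbf{1}\ge A^{T}x>0$, and $d=B^{T}x$ is the desired decision. If instead $p^{T}x<0$ and $A^{T}x\ge 0$, then $-(p^{T}x)\,\mathbf{1}$ has every component strictly positive, so $B^{T}x\ge -(p^{T}x)\,\mathbf{1}>0$, and again $d=B^{T}x$ works. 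In both cases the conclusion drops out of the identity above.

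For the converse I would start from $d\in D$ with $d>0$, say $d=B^{T}x$, and use the riskless asset to push the price down to zero. Put $x'=x-(p^{T}x)\,e_{1}$. Then $p^{T}x'=p^{T}x-(p^{T}x)p_{1}=0$, while $B^{T}x'=B^{T}x-(p^{T}x)\,B^{T}e_{1}=B^{T}x=d$, so by the identity $A^{T}x'=B^{T}x'+(p^{T}x')\,\mathbf{1}=d>0$. Hence $p^{T}x'\le 0$ and $A^{T}x'>0$, i.e.\ $x'$ is an arbitrage portfolio in the sense of the first alternative of Definition \ref{def8}. The only genuinely non-mechanical point in the whole argument is this last step: one has to notice that the hypothesis of a riskless asset is precisely what makes it possible to zero out the cost of a profitable portfolio while leaving its state-by-state profits unchanged, which is where (and the only place where) the assumption $A_{1j}=1$, $p_{1}\le 1$ of Definition \ref{def7} is used. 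Everything else is routine substitution into $B^{T}x = A^{T}x - (p^{T}x)\,\mathbf{1}$.
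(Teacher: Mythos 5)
Your proof is correct and follows essentially the same route as the paper's: the forward direction is the same case split on Definition \ref{def8} using the identity $B^{T}x = A^{T}x - (p^{T}x)\mathbf{1}$, and the converse uses exactly the paper's construction $\bar{x} = x - (p^{T}x)e_{1}$ (the leveraged portfolio), which zeroes the cost via the riskless asset and yields an arbitrage of the first type. Your explicit remark that $B^{T}e_{1}=0$ and that this is the only place the riskless-asset assumption enters is a nice clarification, but the argument is the same.
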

\begin{proof}
From definition \ref{def8}, for an arbitrage portfolio $x$, in both cases, we have $a^T_\theta x - p^Tx \geq 0$ for any $\theta \in \Theta$ and $a^T_{\theta^{*}} x - p^Tx > 0$ for some $\theta^* \in \Theta$. Hence the act $d=B^Tx$ satisfies condition $d >0$. And vice versa, if there is $d >0, d \in D$, then for some portfolio $x$  we have $B^Tx=d>0$. Define $\bar{x} \in \mathbb{R}^n$ as $\bar{x}^T= x^T- \left[p^Tx \hspace{5pt}0\hspace{5pt}0\ldots\hspace{5pt}0 \right]$, then
\begin{equation}
A^T\bar{x}=A^T \left( x - \left[\begin{array}{c}p^Tx\\0\\ \vdots \\0  \end{array}\right]\right)=A^Tx - \left[\begin{array}{c}p^Tx\\p^Tx\\ \vdots \\p^Tx  \end{array}\right]=B^Tx>0,
\end{equation}
and $p^T\bar{x}=p^Tx - p^Tx=0$. Hence, portfolio $\bar{x}$ is an arbitrage portfolio according to the first condition of definition \ref{def8}. Note that portfolio $\bar{x}$ is just the leveraged version of the portfolio $x$. The lemma is proved.
\end{proof}
Hence the following theorem.
\begin{thm}\label{the2}
The market $(A,p)$ does not contain an arbitrage portfolio if and only if for the corresponding matrix scheme $(\Theta,D)$ and for any $D' \subseteq D$ such that $|D'|\geq 2$, the matrix scheme $(\Theta, D')$ contains uncertainty.
\end{thm}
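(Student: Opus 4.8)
The plan is to reduce the claim to Lemma~\ref{lem1} through the Corollary, exploiting the fact that the set of acts $D=\{B^Tx:x\in\mathbb{R}^n\}$ is a linear subspace of $\mathbb{R}^m$; in particular $0\in D$ and $D$ is closed under taking differences. This structural property is exactly what makes the two sides of the equivalence line up, so I would state it explicitly at the outset.

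The first step is to unwind the right-hand side. I would show that the condition ``$(\Theta,D')$ contains uncertainty for every $D'\subseteq D$ with $|D'|\geq2$'' is equivalent to ``no two distinct acts in $D$ are comparable under domination''. For one direction, given distinct $d_1,d_2\in D$, apply the hypothesis to the two-element scheme $D'=\{d_1,d_2\}$; by the Corollary there are acts of $D'$ and states $\theta,\theta'$ realising the strict inequalities of its criterion, and because those inequalities are strict these acts must be $d_1$ and $d_2$ themselves, so $d_1$ and $d_2$ are incomparable under domination. For the converse, take any $D'\subseteq D$ with $|D'|\geq 2$, pick two distinct members of it; they are incomparable, hence the Corollary delivers uncertainty of $(\Theta,D')$ at once. (If $|D|\le 1$ the right-hand side is vacuously true, and then $D=\{0\}$, so there is no arbitrage portfolio; the equivalence holds trivially in this degenerate case.)

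The second step translates ``no two distinct acts in $D$ are comparable under domination'' into the hypothesis of Lemma~\ref{lem1}. By Definition~\ref{def4} and the notation of footnote~\ref{vecs}, $d_1\prec d_2$ holds exactly when $d_2-d_1>0$. Since $D$ is a subspace, the difference $d_2-d_1$ runs over all of $D$ as $d_1,d_2$ run over $D$; hence some pair of distinct acts in $D$ is comparable under domination if and only if there exists $d\in D$ with $d>0$. By Lemma~\ref{lem1}, the latter is equivalent to the existence of an arbitrage portfolio on the market $(A,p)$. Chaining the three equivalences yields the theorem.

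The only point requiring care is the quantifier bookkeeping in the first step: one must check that the universally quantified statement over all subsets of size at least two collapses to the statement over two-element subsets, and that the existential produced by the Corollary, when applied to a two-element scheme, really does identify the pair at hand — which it does precisely because the inequalities in the Corollary's criterion are strict, forcing the two witnessing acts to be distinct. Beyond that, the argument is a direct appeal to the subspace structure of $D$, the Corollary, and Lemma~\ref{lem1}.
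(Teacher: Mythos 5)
Your proposal is correct and takes essentially the same route as the paper: both sides are reduced to Lemma~\ref{lem1} via the uncertainty criterion (you invoke the Corollary, the paper invokes statement 2 of Theorem~\ref{the1} directly, which is the same content), and both arguments use exactly the structural facts you isolate, namely that $0\in D$ and that $D$ is closed under differences. No gaps.
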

\begin{proof}
For convenience, let us operate with negations of the  statements, logical equivalence of which is the object of the theorem. Note that for vectors, the relation of domination from definition \ref{def4} is identical to $(<,D)$ (see footnote \ref{vecs}). If market $(A,p)$ contains an arbitrage portfolio, then according to the previous lemma $0<d$ for some $d \in D$. Then $\left\{d,0\right\} \subseteq D$ and the matrix scheme $(\Theta,\left\{d,0\right\})$ does not contain uncertainty, since there is only one projection of preference of consequences $(\prec^{P},D)$. Namely, for any $x,y \in \left\{d,0\right\}$, $x \prec^P y$ if and only if $x=0, y=d$. Conversely, if for some $D' \subseteq D, |D'|\geq 2$, the matrix scheme $(\Theta, D')$ does not contain uncertainty, then, according to statement 2 of theorem 1, domination in this scheme is connected, and it is possible to chose different $d_1, d_2 \in D'$ such that, for instance, $d_1 < d_2$. If $d_1=B^Tx_1$ and $d_2=B^Tx_2$, where $x_1, x_2 \in \mathbb{R}^n$, then $\bar{d}=B^T(x_2 - x_1) \in D$ and $\bar{d}>0$, which, according to the lemma, implies existence of arbitrage portfolio.
\end{proof}

It is possible to relax the condition of absence of arbitrage opportunity on the financial market, considering portfolio $x$ as arbitrage portfolio only when $(p^Tx <0 \text{ and } A^Tx \geq 0)$. Such market sometimes is called weakly arbitrage-free \cite{D}. 
For such definition of the arbitrage portfolio the statement similar to theorem \ref{the2} is obtained if in the definition of projection of preference of consequences in matrix scheme the relation of domination is replaced by its enforced version: for any $d_1, d_2 \in D$
\begin{equation}
d_1 \prec^* d_2 \Leftrightarrow d_1(\theta) < d_2(\theta) \text{ for all } \theta \in \Theta.
\end{equation}

So theorem \ref{the2} states that absence of arbitrage opportunity in the financial market is equivalent to existence of uncertainty in any matrix scheme obtained by restriction of the decision set of the matrix scheme corresponding to that market. In other words, this is the case when uncertainty does not disappear from the matrix scheme in the result of deleting of or forbidding some acts. If the market contains arbitrage portfolio, then there is a set of acts (portfolios) that forms the matrix scheme which does not contain uncertainty.

Thus, theorem \ref{the2} allows considering one of the fundamental notions of financial theory without leaving the precincts of decision theory and, at the same time, without recurring to the notion of risk attitude of decision maker. This is all the more significant since it renders financial theory a particular case or, better, an example of decision theory, the idea usually disregarded. \footnote{For instance, in \cite{Back} one reads: "The option pricing formula of Black and Scholes (1973) is valid without regard to the risk preferences of investors, because it is based solely on the absence of arbitrage opportunities."}.

On the other hand, taking definition \ref{def6} and theorem \ref{the2}, one can extend the notion of arbitrage opportunity from financial markets to any decision making situation. Namely, one can say that any matrix scheme, that does not satisfy the conditions of theorem \ref{the2}, contains an arbitrage opportunity. In other words, the notion of arbitrage acquires the meaning of the univalent projecting of the preference of consequences onto the preference of acts. The opportunity of arbitrage means the possibility of such univalent projecting for some pair of acts in the scheme.

\section{Conclusions}

Thus, this paper gives definition of the notion of uncertainty in matrix decision scheme and provides the criterion of its existence. This formalisation of the notion of uncertainty is a farther development of the corresponding definition and theorem, proposed in \cite{IM2008, I2010} for the lottery decision scheme. Together these results comprise an alternative to Knightian approach to the problem of formalisation of the notion of uncertainty. At the same time, this alternative formulation corresponds to the principle of internal consistency of M. Allais.

It turns out that this approach to the notion of uncertainty in decision scheme is related to the notion of arbitrage opportunity on financial markets. It is shown that, on the one hand, absence of arbitrage opportunity (i.e. absence of arbitrage portfolio) on the financial market is a particular case of existence of uncertainty in matrix scheme, and, on the other hand, it is possible to extend the notion of arbitrage opportunity on any, not necessarily financial, decision making situation.

\section{Aknowlegements}

The authors are thankful to professor Victor Ivanenko for fruitful discussions, remarks and suggestions.

\newpage

\end{document}